\documentclass[12pt,english]{article}
\usepackage[english]{babel}
\usepackage{amsmath,amssymb,amscd,color,amsthm}
\usepackage{amsfonts}
\usepackage{graphicx}
\usepackage{url}
\usepackage{hyperref}
\usepackage{cite}
\usepackage{physics}

\setlength{\parskip}{0ex}
\setlength{\textwidth}{16cm}
\setlength{\textheight}{22cm}
\addtolength{\oddsidemargin}{-12mm}
\addtolength{\topmargin}{-7mm}

\makeatletter
\renewcommand\section{\@startsection {section}{1}{\z@}%
                                   {-5.5ex \@plus -1ex \@minus -.2ex}
                                   {2.3ex \@plus.2ex}%
                                   {\normalfont\large\bfseries}}
\renewcommand\subsection{\@startsection{subsection}{2}{\z@}%
                                     {-3.25ex\@plus -1ex \@minus -.2ex}%
                                     {1.5ex \@plus .2ex}%
                                     {\normalfont\bfseries}}

\numberwithin{equation}{section}

\makeatother

\newcommand{\bea}{\begin{eqnarray}}
\newcommand{\eea}{\end{eqnarray}}
\newcommand{\be}{\begin{equation}}
\newcommand{\ee}{\end{equation}}

\newcommand{\Z}{{\mathbb Z}}
\newcommand{\R}{{\mathbb R}}

\def\eg{{\it e.g.~}}

\newcommand{\cF}{{\cal F }}

\newcommand{\ie}{{\it i.e.~}}

\renewcommand{\title}[1]{\vbox{\center\LARGE{#1}}\vspace{5mm}}
\renewcommand{\author}[1]{\vbox{\center#1}\vspace{5mm}}
\newcommand{\address}[1]{\vbox{\center\footnotesize\em#1}}
\newcommand{\email}[1]{\vbox{\center\footnotesize\tt#1}\vspace{5mm}}

\theoremstyle{definition}\newtheorem{prop}{Proposition}
\theoremstyle{definition}

\newtheorem{example}{Example}[section]
\newtheorem{lem}{Lemma}[section]

\begin{document}

\begin{titlepage}

 \begin{flushright}

\end{flushright}

\begin{center}

\hfill \\
\hfill \\
\vskip 1cm

\title{Sparse Modular Forms, Lattices, and Codes}

\author{Christoph A. Keller$^{a,b}$, Ashley Winter Roberts$^a$}

\address{
${}^a$ Department of Mathematics, University of Arizona, Tucson, AZ 85721-0089, USA
\\
${}^b$ Institut f\"ur Theoretische Physik, ETH Zurich,
CH-8093 Z\"urich, Switzerland}

\email{ cakeller@arizona.edu, awroberts@arizona.edu}

\end{center}

\vfill

\abstract{ Motivated by sparseness conditions for holographic CFTs, we investigate sparseness of modular forms, lattices, and codes. For this we investigate the free energy of such objects as their weight, dimension or size goes to infinity.
We construct families of modular forms that are sparse, such as the Eisenstein series $E_{2k}(\tau)$. We then investigate lattices that come from codes and introduce a sparseness condition for such lattices. We investigate the limit of lattices constructed from self-dual Reed-Muller codes and provide evidence that they are sparse in this sense.}

\vfill

\end{titlepage}

\eject

\tableofcontents

\section{Introduction}

Holographic CFTs that describe gravity on anti-de Sitter spaces need to satisfy various conditions. One such condition is usually some type of sparseness condition: the light spectrum should not contain too many states. On the most basic level, this is necessary for a large $N$ or large central charge limit to exist. If the number of light states diverges as $N$ or $c$ goes to infinity, such a limit does not exist.

On top of that, one usually wants to match a specific spectrum on the gravity side. This requires imposing additional conditions. On AdS${}_3$ for instance, one could for instance assume the existence of pure gravity. This theory only has boundary gravitons and BTZ black holes as degrees of freedom. Its holographic dual would then be an extremal CFT, that is a CFT that only has Virasoro descendants up to the weight where additional primaries are required from modular invariance \cite{Witten:2007kt}.

This is the strongest sparseness condition that one could try to impose. In fact it is unclear both on the gravity and the CFT side if such a theory exists. A weaker condition one can impose is to require that the free energy has the right thermodynamic properties; namely that it exhibits a Hawking-Page phase transition \cite{Hawking:1982dh}. Such a transition occurs in Einstein gravity, where the dominating contribution transitions from the vacuum at low temperature to BTZ black holes at high temperature. However, if a theory has a lot of light states, their contribution might dominate at intermediate temperature and hence smooth out the transition.

To avoid this, it is necessary to impose a sparseness condition on the light states. This was worked out in \cite{Hartman:2014oaa, Dey:2024nje}. Even though this condition is much less restrictive than the extremality condition above, in practice it is still not that easy to satisfy. There are actually not as many explicit examples of CFTs that satisfy it as one might expect. Most examples are symmetric orbifolds or closely related constructions.

In particular, families of lattice CFTs, that is Narain constructions, do not satisfy this sparseness condition. In fact, they do not even have a large $c$ limit due to the fact that the number of free boson descendants diverges. Nonetheless it is of interest to consider such families. It therefore makes sense to introduce new types of sparseness conditions that are appropriate for such theories.
Such a condition was indeed introduced in \cite{Dymarsky:2020pzc}. It guarantees a similar behavior for the growth of entropy as the original sparseness condition.
However, no explicit families of Narain CFTs satisfying this condition are known yet. One of the goals of this article is to make progress in this direction.

There is a very natural candidate for such sparse lattices: extremal lattices. These are even self-dual lattices, \ie lattices that lead to holomorphic CFTs, whose shortest vector has length square $2(\lfloor \frac{d}{24}\rfloor+1)$. This is in fact the longest shortest vector that is compatible with modularity, similar to the condition of extremal CFTs. To put it another way, they are lattices whose theta function is given by the unique extremal modular form of weight $d/2$. Such lattices have been constructed up to $d=80$ \cite{MR209983,MR1662447,MR1489922,MR2999133,MR3225314}. However, they are known not to exist if $d>163264$ \cite{MR0376536,MR2854563}. The reason for this is that extremal modular forms start to have negative coefficients at that weight and can therefore no longer correspond to lattice theta functions. 
Nonetheless, as a warmup we establish that extremal modular forms are indeed sparse in the sense of \cite{Dymarsky:2020pzc}. 

We then turn to Eisenstein series $E_k(\tau)$. Unlike extremal modular forms, for doubly even $k$, these always have non-negative coefficients. We establish that they are sparse. However, for $k>6$ they no longer have integer coefficients. Therefore, they  cannot directly correspond to the theta function of a lattice. However, by the Siegel-Weil formula \cite{MR165033,MR223373} they correspond to a suitable average of the theta function over all even self-dual lattices of a given dimension \cite{MR2209183}. The conclusion is thus that the `average' (in an appropriate sense) chiral lattice CFT is sparse, as was already pointed out in \cite{Dymarsky:2020pzc}.

This leads to the somewhat vexing situation that even though the average lattice CFT should be sparse, we know no concrete example. To resolve this, we turn to the question of constructing examples.
It has been known for a long time that lattices can be constructed from linear error-correcting codes. There are various constructions of this type, for instance Construction A \cite{MR285994}.
There are also constructions B,C,D, which give lattices with different properties starting from different types of codes \cite{MR1662447}.
These constructions all give even Euclidean lattices that can then be turned into CFTs by the standard constructions. More precisely, the lattices themselves give the symmetry algebra of the CFT, and the dual lattices give their representations. If the lattice is self-dual, then the resulting CFT is actually chiral. The advantage of this is that the partition function is obtained from the lattice theta function only, without the need to also take into account the contribution of other representations of the symmetry algebra.

More recently Dymarsky and Shapere \cite{Dymarsky:2020qom} pointed out that quantum error correcting code of stabilizer type can be used to construct non-chiral CFTs. There has been a lot of follow-up work \cite{Buican:2021uyp, Yahagi:2022idq, Henriksson:2022dnu, Kawabata:2022jxt, Kawabata:2023nlt, Alam:2023qac, Aharony:2023zit, Mizoguchi:2024ahp}. However, most of this work deals with fixed (and relatively small) central charges.
Not much has been done on the large $c$ limit of such theories.\cite{Angelinos:2022umf} discusses this limit, but only from an ensemble average point of view. 

In this article we return to the chiral case and use construction A to construct families of lattices. 
Such lattices cannot satisfy the lattice sparseness condition of \cite{Dymarsky:2020pzc}. In a first step, we therefore introduce an analogous sparseness condition for such code lattices. We then investigate the limit of various families of codes. In particular we investigate self-dual Reed-Muller codes and their lattices. We derive various bounds on their free energy and then conjecture that they satisfy the sparseness condition. We believe that this is a first step towards finding sparse lattices, in particular because such Reed-Muller codes can also be used to construct Barnes-Wall lattices, which in turn are a candidate for sparse lattices.

\section{Sparse Modular Forms}
\subsection{Sparseness conditions}
Let us begin with a quick review of sparseness conditions. Physically, such conditions are motivated by requiring that the holographic theory gives the right phase transitions expected for gravity. For instance, the original condition \cite{Hartman:2014oaa} ensured that the theory gives the expected Hawking-Page transition.

Mathematically, we impose this condition in the following way: We consider a family of modular objects $Z_n(\tau)$. For simplicity we specialize $\tau = iy$ to be purely imaginary. For $y\to\infty$, the behavior of $Z_n(y)$ is given by its leading term. We then call a family $Z_n(y)$ \emph{sparse} if in the limit $n\to \infty$, the behavior of its limit $Z(y)$ is given by the leading term for all $y>1$. That is, we exchange a limiting behavior in $y$ for a limiting behavior in $n$. In practice, to have a quantity for which this limit exists, we usually consider a rescaled logarithm of $Z_n(y)$ such as $\cF_n(y) = \frac1n \log Z_n(y)$. We will usually call this the free energy.

Let us make this more concrete by explaining how this works for the original sparseness condition for weakly holomorphic modular functions $Z_n(\tau)$. We take a Fourier expansion of the form $Z_n(\tau) = q^{-n} + \ldots$, so that for $y\to\infty$, that is for $q\to 0$, we have $\cF_n(y) = \frac1n \log Z_n(y) \to 2\pi y$. The sparseness condition on a family of such functions is thus
\be
\cF(y) = \lim_{n\to \infty} \frac1n \log Z_n(y) = 2\pi y \quad \textrm{for all }\ y>1\ .
\ee
Because of modularity, the behavior of $\cF(y)$ for $y<1$ is fixed by the behavior for $y>1$. In the case at hand we have $\cF(y) = \cF(1/y) = 2\pi/y$ for $y>1$.

For modular forms, this works similarly. In that case we want to consider modular forms $Z_n(\tau)$ of weight $n$ of the form $Z_n(\tau) = 1+ \ldots$. The sparseness condition is then
\be\label{latticesparse}
\cF(y) = \lim_{n\to \infty}\frac1n \log Z_n(y) = 0 \quad \textrm{for all }\ y>1\ ,
\ee
which gives the behavior $\cF(y) = \cF(1/y) = \log(iy)$ for $y>1$.  This is essentially the holomorphic version of the sparseness condition for lattice theta functions introduced in \cite{Dymarsky:2020pzc}.

\subsection{Extremal modular forms}\label{ss:EMF}
Let us now see if we can construct families of modular forms that satisfy (\ref{latticesparse}).
For this, let us start with the definition of an extremal modular form. We start with a quick review of modular forms. The modular group $SL(2,\Z)$ acts on the upper complex half plane via M\"obius transformations. A weight $k$ modular form $f$ is a function on $\mathbb{H}$ that satisfies: $f(\tau+1)=f(\tau)$ and $f\left(-\frac{1}{\tau}\right)=\tau^kf(\tau)$, and has a power series expansion in $q:=e^{2\pi i\tau}$.

A standard example of a modular form is the Eisenstein series $E_{2k}(\tau)$, which has weight $2k$ and has Fourier expansion
\be\label{E2k}
E_{2k}(\tau) = 1 - \frac{4k}{B_{2k}}\sum_{n=1}^\infty \sigma_{2k-1}(n)e^{2\pi i\tau n}\ ,
\ee
where the divisor sum function is defined as
\be
\sigma_l(r)=\sum_{d\mid r} d^l
\ee
and $B_{2k}$ is the Bernoulli numbers, with generating function:
\be
\sum_{n=0}^\infty \frac{B_{n}}{n!} x^{n} = \frac{x}{e^{x}-1}\ .
\ee
We note that the doubly even Bernoulli numbers $B_{4k}$ are negative, so that the doubly even Eisenstein series $E_{4k}$ have non-negative coefficients.
The space $M_k$ of modular forms of weight $k$, for even $k$, is a $\mathbb{C}-$vector space and has dimension 
\be
m_k=\dim M_k = \left\{ \begin{array}{cc} \lfloor \frac{k}{12} \rfloor & k \equiv 2 \mod 12 \\ 
\lfloor \frac{k}{12} \rfloor +1 & \textrm{otherwise} \end{array}
\right.
\ee
Maybe more usefully, a basis of $M$ is freely generated by $E_4$ and $\Delta$, with the weight given in the obvious way. Here $\Delta$ is given by
\be
\Delta = \frac{E_4^3-E_6^2}{1728} = q \prod_{i=1}^\infty (1-q^i)^{12}\ ,
\ee
and $\Delta$ has weight 12.
A basis for $M_k$ is then given by
\be
E_4^a, \ E_4^{a-3}\Delta, \ E_4^{a-6} \Delta^2, \ldots . \label{triangbasis}
\ee
This is an upper triangular basis in the sense that $E_4^a \Delta^b = q^b + \ldots$, where $4a+12b=k$.

This observation allows us to define extremal modular forms. The extremal modular form $F_k$ of weight $k$ is the unique modular form of that weight whose Fourier expansions starts as
\be
F_k(\tau) = 1 + 0q + 0q^2 + \ldots +0q^{m_k-1} + O(q^{m_k})\ .
\ee
That is, we fix the first $m_k$ Fourier coefficients to 1 and 0 respectively. Note that because of the triangular basis (\ref{triangbasis}), $F_k$ always exists and is unique. Moreover, because $E_4$ and $\Delta$ have integer coefficients, $F_k$ also has integer coefficients. 
As was first noted in \cite{MR0376536} and then worked out in more detail in \cite{MR2854563} however, the coefficients are not all positive starting at weight $k=81633$.

Not surprisingly, the family $F_k$ of extremal modular forms is sparse. Note however that because of the comment above, this does not follow directly from the criterion in \cite{Dymarsky:2020pzc}, which assumes that all coefficients of the form under consideration are positive. Instead we have to make a slightly more technical argument using the estimates for coefficients given in \cite{MR2854563}. These estimates are based on writing the extremal modular form as an Eisenstein series plus a correction function $h(\tau)$. Let us therefore first discuss sparseness of Eisenstein series.

\subsection{Eisenstein series}
Let us next consider the family of doubly even Eisenstein series $E_{4k}(\tau)$. These automatically have positive coefficients. However, beyond $k=1$ the coefficients are in general no longer integer, so that they cannot correspond to lattice theta functions. Nonetheless, let us show that they are sparse.

\begin{prop}\label{prop1}

The even index Eisenstein series $E_{2k}$ are a sparse family of modular forms.

\end{prop}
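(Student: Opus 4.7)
The plan is to show, writing $w=2k$ for the weight, that
\[
\lim_{w\to\infty}\frac{1}{w}\log E_{w}(iy) = 0 \qquad\text{for all } y>1,
\]
which is the sparseness condition \eq{latticesparse}. The strategy is to pinch $\log E_w(iy)$ between a trivial lower bound and an upper bound that decays exponentially in $w$.

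For the lower bound I use that when $w\equiv 0\pmod 4$ we have $B_w<0$, so every non-constant Fourier coefficient in \eq{E2k} is positive and $E_w(iy)\geq 1$, giving $\log E_w(iy)\geq 0$. For $w\equiv 2\pmod 4$ the coefficients flip sign, but once the upper bound below is in hand it shows $|E_w(iy)-1|$ is small enough that $E_w(iy)>0$ for $w$ large, so that the logarithm is well defined and close to zero.

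The main step is the upper bound on
\[
|E_w(iy)-1|\leq \frac{2w}{|B_w|}\sum_{n\geq 1}\sigma_{w-1}(n)\, e^{-2\pi y n},
\]
for which I combine three elementary estimates: Euler's formula $|B_w|=2\,w!\,\zeta(w)/(2\pi)^w$, which gives $2w/|B_w|\leq w(2\pi)^w/w!$; the divisor bound $\sigma_{w-1}(n)\leq n^{w-1}\zeta(w-1)$; and a unimodal sum-to-integral comparison
\[
\sum_{n\geq 1}n^{w-1}e^{-2\pi y n}\leq (1+o(1))\,\frac{(w-1)!}{(2\pi y)^w},
\]
where the $o(1)$ comes from estimating the peak term at $n_\star=(w-1)/(2\pi y)$ via Stirling. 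The factorials then cancel and everything collapses to
\[
|E_w(iy)-1|\leq (1+o(1))\,\zeta(w-1)\, y^{-w}=O(y^{-w}).
\]
For $y>1$ this is exponentially small in $w$, so $\log E_w(iy)=O(y^{-w})$ and dividing by $w$ yields the claim. The main obstacle is not conceptual but organizational: one must verify via Stirling that the peak term is $O(1/\sqrt{w})$ times the integral so that the overall constant in the sum-to-integral bound is $O(1)$ uniformly in $w$, and one must keep track of the sign issue in the $w\equiv 2\pmod 4$ case so that taking $\log$ is justified.
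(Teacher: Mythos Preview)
Your argument is correct and reaches the same conclusion as the paper, namely that $|E_w(iy)-1|\to 0$ (in fact $=O(y^{-w})$), which is stronger than what sparseness requires. The overall shape---bound the non-constant part of the $q$-expansion---is the same, but the technical implementation differs. The paper uses the Stirling-type asymptotic $|B_{2k}|\sim 4\sqrt{\pi k}\,(k/\pi e)^{2k}$ together with the crude bound $\sigma_{2k-1}(n)\le 2n^{2k}$, then controls $\sum n^{2k}e^{-2\pi yn}$ by locating the peak at $n_0=k/(\pi y)$ and splitting the sum at $2n_0^2$ into a head (bounded by the number of terms times the peak value) and a tail (bounded by a geometric series). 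You instead invoke Euler's exact identity $|B_w|=2\,w!\,\zeta(w)/(2\pi)^w$, the sharper divisor bound $\sigma_{w-1}(n)\le \zeta(w-1)\,n^{w-1}$, and a unimodal sum-to-integral comparison $\sum_{n\ge1}n^{w-1}e^{-2\pi yn}\le \int_0^\infty + O(\text{peak})=(1+o(1))(w-1)!/(2\pi y)^w$. The payoff of your route is that the factorials cancel exactly and one lands directly on $(1+o(1))\zeta(w-1)\,y^{-w}$ without any ad hoc cutoff; the paper's splitting argument is more hands-on but yields essentially the same $y^{-2k}$ decay. Your explicit remark about the sign of $B_w$ in the $w\equiv 2\pmod 4$ case, ensuring $E_w(iy)>0$ for large $w$ so that the logarithm is defined, is a point the paper leaves implicit.
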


\begin{proof}

We need to show that $\lim_{k\rightarrow\infty}1/k\log E_{2k}(iy)=0$ if $y>1$. We will in fact show the stronger result
\be
\lim_{k\rightarrow\infty}\log(E_{2k}(iy))=0\ .
\ee 
From (\ref{E2k}) we see that we need to bound
\be
\log E_{2k}(iy)=\log\left(1-\frac{4k}{B_{2k}}\sum_{n=1}^\infty\sigma_{2k-1}(n)e^{-2\pi y n}\right)
\ee 
This will go to 0 if the second part of the log goes to 0, so that we need to investigate the sum
\be
\frac{4k}{|B_{2k}|}\sum_{n=1}^\infty\sigma_{2k-1}(n)e^{-2\pi y n}
\ee 
We have the following bounds and estimates for the quantities involved:
\be 
B_{2k}\sim4\sqrt{\pi k}\left(\frac{k}{\pi e}\right)^{2k}\textrm{\quad for }k\rightarrow\infty,\textrm{\qquad}\sigma_{2k-1}(n)\leq2n^{2k},
\ee 
Asymptotically, we can thus bound the sum as
\be
\frac{4k}{|B_{2k}|}\sum_{n=1}^\infty\sigma_{2k-1}(n)e^{-2\pi y n}\sim\frac{4k}{4\sqrt{\pi k}\left(\frac{k}{\pi e}\right)^{2k}}\sum_{n=1}^\infty 2n^{2k}e^{-2\pi y n}=2k^{1/2-2k}\pi^{2k-1/2}e^{2k}\sum_{n=1}^\infty n^{2k}e^{-2\pi y n}\ .
\ee 
The summands have a maximum at $n_0=\frac{k}{\pi y}$ of $(\pi e)^{-2k}(k/y)^{2k}$. We want to split the sum into a finite part that safely contains that maximum, and an infinite tail that is far enough from the maximum. Concretely, we split the sum at $n=2n_0^2$ and bound the two pieces individually:
\begin{enumerate}
\item For the finite part, we simply multiply the number of terms by the maximal term to obtain the bound
\begin{multline}
2k^{1/2-2k}\pi^{2k-1/2}e^{2k}\sum_{n=1}^{2n_0^2} n^{2k}e^{-2\pi y n}\\
\leq2k^{1/2-2k}\pi^{2k-1/2}e^{2k}\cdot 2n_0^2\cdot (\pi e)^{-2k}(k/y)^{2k}=4\pi^{-1/2}k^{5/2}y^{-2k-2}\ .
\end{multline}
This indeed goes to 0 for $k\to\infty$ as long as $y>1$. 

\item  The tail is given by 
\be
2k^{1/2-2k}\pi^{2k-1/2}e^{2k}\sum_{n=2n_0^2}^{\infty} n^{2k}e^{-2\pi y n}
\ee 
We claim that the summand can be bounded by $e^{-\pi y n}$. This is clearly the case for $n$ large enough, since polynomial growth is slower than exponential growth. To see that it already holds for $n\geq 2n_0^2$, write
$n^{2k}e^{-2\pi yn}=e^{2k\log(n)-2\pi y n}$; when $n=2n_0^2$, we have
\be
2k\log(2n_0^2)-\pi y (2n_0^2)=4k\log(n_0)+2k\log(2)-2\pi y n_0^2=2k\left(2\log\left(\frac{k}{\pi y}\right)+\log(2)-\frac{k}{\pi y}\right)
\ee 
which is negative for $k$ large enough; this establishes that the bound already holds for the first term.  Furthermore, we can consider the derivative
\be
\frac{d}{dn}2k\log(n)-2\pi y n=\frac{2k}{n}-\pi y\ .
\ee 
When $n\geq 2n_0^2$, this is bounded by
\be
\frac{2k}{2n_0^2}-2\pi y=\frac{\pi^2y^2}{k}-\pi y=-\pi y\left(1-\frac{\pi y}{k}\right)
\ee 
This is also negative for $k$ large enough, so for large enough $k$, we have indeed $2k\log(n)-\pi y n\leq0$ for all terms in the sum. 

We can thus bound the sum by the geometric series
\be
\sum_{n=2n_0^2}^\infty e^{-\pi y n}=\frac{e^{-\pi y 2n_0^2}}{1-e^{-\pi y}}=\frac{e^{-2k^2/(\pi y)}}{1-e^{-\pi y}}\ ,
\ee 
bounding the tail by
\be
2k^{1/2-2k}\pi^{2k-1/2}e^{2k}\frac{e^{-2k^2/(\pi y)}}{1-e^{-\pi y}}
\ee 
which goes to 0 as $k\to \infty$.
\end{enumerate}
\end{proof}

In particular, this shows that the doubly even Eisenstein series $E_{4k}$ are a sparse family with non-negative coefficients.
However, beyond weight 4 the coefficients will in general be non-integer, so that the $E_{4k}$ cannot correspond to a lattice theta series. Nonetheless,  this result may have implications on the sparseness of lattices: By the Siegel-Weil formula \cite{MR165033,MR223373}, the Eisenstein series $E_{4k}$ is the (appropriately weighted) average of the theta series of even unimodular lattices of dimension $8k$. The result thus seems to imply that in an appropriate sense, the `average' lattice is sparse.

We are now ready to give the proof that extremal modular forms are sparse:

\begin{prop}
The family of extremal modular forms $F_k$ are a sparse family of modular forms with integer (but eventually negative) coefficients. \end{prop}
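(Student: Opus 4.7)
The plan is to reduce sparseness of $F_k$ to sparseness of the Eisenstein series $E_k$, which is already established in Proposition \ref{prop1}, by writing $F_k = E_k - h_k$ for a suitable cusp form $h_k$ of weight $k$. The existence and uniqueness of $h_k$ follows from the triangular basis (\ref{triangbasis}): the space of cusp forms $S_k$ has dimension $m_k - 1$, and the evaluation map $S_k \to \C^{m_k-1}$ sending a form to its first $m_k - 1$ Fourier coefficients is a linear isomorphism. We may therefore choose $h_k \in S_k$ to match the first $m_k - 1$ Fourier coefficients of $E_k$; the difference $E_k - h_k$ then has Fourier expansion $1 + O(q^{m_k})$ and coincides with $F_k$ by uniqueness of the extremal form.

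Since $\log E_k(iy) \to 0$ for $y > 1$ by Proposition \ref{prop1}, it remains to show $\tfrac{1}{k}\log|h_k(iy)| \to 0$ for $y > 1$. For this I would invoke the estimates of \cite{MR2854563}, where $h_k$ is expressed via a Petersson/Rademacher-type series involving Bessel functions $I_{k-1}(4\pi\sqrt{mn})$ weighted by the first $m_k-1$ Fourier coefficients of $E_k$. Substituting the asymptotic $I_{k-1}(x) \sim e^x/\sqrt{2\pi x}$, summing against $e^{-2\pi y n}$, and performing a saddle-point analysis in $n$ at the dominant contribution yields a subexponential-in-$k$ bound on $|h_k(iy)|$. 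Combined with $E_k(iy) \to 1$, this gives $\tfrac{1}{k}\log F_k(iy) \to 0$, establishing sparseness.

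The main obstacle is precisely the bound on $|h_k(iy)|$. A naive coefficient estimate of Hecke type $|b_n(h_k)| \lesssim n^{k/2}$, summed against $e^{-2\pi y n}$, grows far too rapidly in $k$ to yield sparseness uniformly for $y > 1$; only the refined Bessel and saddle-point estimates of \cite{MR2854563}, which exploit the specific analytic structure of extremal forms, are sharp enough. A related subtlety is that starting at $k = 81633$ the coefficients of $F_k$ can be negative, so $F_k(iy)$ is not automatically bounded below by its leading coefficient $1$; the Eisenstein-plus-correction decomposition is what lets us control both upper and lower bounds on $F_k(iy)$ in the limit, which is why positivity arguments of the type used in Proposition \ref{prop1} do not transfer directly.
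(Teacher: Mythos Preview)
Your overall strategy — write $F_k = E_k + h_k$ with $h_k$ a cusp form, invoke Proposition~\ref{prop1} for $E_k$, and control $h_k(iy)$ — is exactly the paper's approach. The gap is in your assessment of \emph{how} to control $h_k$.

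You dismiss the Hecke-type bound $|b(n)| \lesssim n^{k/2}$ as ``naive'' and ``far too rapidly'' growing, and propose instead to work directly with Bessel asymptotics. This is a misconception: the paper succeeds precisely with the bound $|b(n)| \le 2C_k\, n^{k/2}$, taken straight from \cite{MR2854563}. What you are missing is that the constant $C_k$ is not absolute but carries an explicit $k$-dependence,
\[
C_k \;\le\; \frac{(2\pi)^k}{(k-1)!}\, e^{28.466}\sqrt{\tfrac{k}{12}\log k}\,\Bigl(1.0242382\,\tfrac{k}{12}\Bigr)^{k/2},
\]
and the factor $1/(k-1)!$ is exactly what compensates for the growth of $\sum_n n^{k/2} e^{-2\pi y n}$. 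Plugging this in and repeating the split-at-$2n_0^2$ argument from Proposition~\ref{prop1} (now with maximum at $n_0 = k/(4\pi y)$) gives a finite piece of order $(1.024\,\pi e /(12y))^{k/2}$, which already vanishes for $y > 0.73$, comfortably covering $y>1$. No direct Bessel or Rademacher saddle-point analysis is needed beyond what is packaged into the quoted bound on $C_k$.

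So the ``main obstacle'' you identify is not an obstacle: the Hecke bound with the explicit $C_k$ is sharp enough, and your proposed refinement is both vaguer and unnecessary. You should also state the integrality of the coefficients explicitly (it follows from the triangular basis in $E_4,\Delta$), as the proposition asserts it.
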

\begin{proof}
Since $E_4$ and $\Delta$ have integer coefficients and each coefficient in the polynomial expansion of $F_k$ in $E_4$ and $\Delta$ is an integer, $F_k$ has integer coefficients. We thus only have to prove sparseness. To do this, we use the results of
\cite{MR2854563}. The idea is to write $F_k(\tau)=E_k(\tau)+h_k(\tau)$, where $h_k(\tau)=\sum_{n=1}^\infty b(n)q^n$ is some cusp form. 
As before, we then want to show that for $y>0$,
\be
\lim_{k\to\infty} F_k(iy) = \lim_{k\to\infty} (E_k(iy)+h_k(iy)) = 1\ . 
\ee
We already established this for $E_k(iy)$, so let us focus on $h_k(iy)$.
The results of \cite{MR2854563} establish that its coefficients can be bounded as $|b(n)|\leq2C_k n^{k/2}$, 
so that
\be
h_k(iy) = \sum_{n=1}^\infty 2 C_k n^{k/2} e^{-2\pi n y}\ 
\ee
with
\be 
C_k\leq\frac{(2\pi)^k}{(k-1)!}e^{28.466}\sqrt{\frac{k}{12}\log(k)}\left(1.0242382\frac{k}{12}\right)^{k/2}\ ,
\ee 
We deal with the sum in the same way as above, that is by identifying the maximal summand and then splitting the sum up into two pieces.
The summands have a maximum at $n_0=\frac{k}{4\pi y}$ of $2^{-k}(\pi e)^{-k/2}(k/y)^{k/2}$. Again splitting the sum at $2n_0^2$, up to constant factors the finite part is bounded by
\begin{multline}
\frac{(2\pi)^k k}{k!}\sqrt{k\log(k)}\left(1.0242382\frac{k}{12}\right)^{k/2}\left(\frac{k}{ y}\right)^2\left(2^{-k}(\pi e)^{-k/2}(k/y)^{k/2}\right)\\
\sim k^{-k+k/2+k/2} \left(\frac{1.024\ldots\cdot \pi e}{12 y}\right)^{k/2} 
\end{multline}
where in the second line we used Stirling's approximation for $k!$ and neglected all terms that do not grow at least exponentially in $k$. We thus find that this contribution indeed goes to 0 for $y>0.728893$. 

For the tail, we use the same as in proposition~\ref{prop1}  to establish that $e^{\frac{k}{2}\log(n)-2\pi y n}\leq e^{-\pi y n}$. A geometric series then bounds  bounding the tail by
\be
\frac{(2\pi)^k}{(k-1)!}e^{28.466}\sqrt{\frac{k}{12}\log(k)}\left(1.0242382\frac{k}{12}\right)^{k/2}\frac{e^{-2k^2/(\pi y)}}{1-e^{-\pi y}}
\ee 
which also goes to 0.

\end{proof}

\subsection{Floored Eisenstein series}
We saw that Eisenstein series are sparse, but cannot correspond to a lattice theta function because their coefficients are not integer. There is of course a fairly obvious way to turn the Eisenstein series $E_{k}$ into a modular form that has integer coefficients: We can simply take the floor of the first $m_k$ coefficients of $E_{k}$ and construct the corresponding modular form, which we will call the \emph{floored Eisenstein series} $FE_{k}(\tau)$.
By the same arguments as in section~\ref{ss:EMF}, $FE_{k}$ will automatically have integer coefficients. Not surprisingly, for doubly even $k$ it also has positive coefficients and is sparse, as the following proposition shows:

\begin{prop}
For large enough doubly even $k$,
the floored Eisenstein series $FE_{k}$ form a sparse family of modular forms with non-negative integer coefficients.
\end{prop}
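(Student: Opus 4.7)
The proof should parallel the extremal modular form case by writing $FE_k = E_k - h_k$ where $h_k := E_k - FE_k$ is a cusp form. By construction its first $m_k$ Fourier coefficients are the fractional parts $\{a_n\} \in [0,1)$ of the coefficients $a_n$ of $E_k$, with vanishing constant term since $a_0 = 1 \in \Z$. The integer coefficient claim follows immediately from the same argument used for extremal modular forms: the triangular basis $\{E_4^a \Delta^b\}$ expresses $FE_k$ as a $\Z$-linear combination of forms with integer Fourier expansions.

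For non-negativity, we split into two ranges. For $n < m_k$, the $n$-th coefficient of $FE_k$ is $\lfloor a_n \rfloor \geq 0$ because the doubly even Eisenstein coefficients $a_n$ are non-negative. For $n \geq m_k$, the coefficient equals $a_n - b_n$, where $b_n$ is the $n$-th Fourier coefficient of $h_k$. The Eisenstein side satisfies $a_n = -\frac{2k}{B_k}\sigma_{k-1}(n) \geq \frac{2k}{|B_k|} n^{k-1}$, which, by the asymptotic of $|B_k|$, grows at a super-exponential rate in $k$ with $n^{k-1}$ in $n$. On the cusp side one needs a uniform bound of the form $|b_n| \leq C_k n^{k/2}$, analogous to the Jenkins--Rouse bound used for extremal modular forms, with $C_k$ at most exponential in $k$. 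For $k$ large enough and $n \geq m_k$, the Eisenstein term dominates and the difference is non-negative.

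For sparseness we must show $\lim_{k\to\infty}\frac{1}{k}\log FE_k(iy) = 0$ for $y > 1$. Write $FE_k(iy) = E_k(iy) - h_k(iy)$. Proposition~\ref{prop1} already establishes $\log E_k(iy) \to 0$, so it suffices to prove $h_k(iy) \to 0$. Using the bound $|b_n| \leq C_k n^{k/2}$ one then runs the same two-step argument as in the extremal modular form proof: identify the maximum of $n^{k/2} e^{-2\pi y n}$ at $n_0 = k/(4\pi y)$, split the sum at $2 n_0^2$, bound the finite part by (number of terms) times (maximum), and bound the tail by a geometric series after verifying that $\frac{k}{2}\log n - 2\pi y n \leq -\pi y n$ for $n \geq 2 n_0^2$.

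The main obstacle is producing the uniform Deligne/Hecke-type estimate $|b_n| \leq C_k n^{k/2}$ for $h_k$. Unlike the extremal modular form, the cusp form $h_k$ is not described directly by a closed formula; one must bound it in terms of its first $m_k$ Fourier coefficients, each of which lies in $[0,1)$. This can be done by expanding $h_k$ in the basis $\{E_4^{(k-12j)/4}\Delta^j\}_{j=1}^{m_k-1}$, bounding the (uniquely determined) combining coefficients inductively using the $[0,1)$ data, and combining with standard bounds on coefficients of the basis elements. Once this bound is in hand, the remaining estimates for both non-negativity and sparseness reduce to the calculations already carried out in Section~\ref{ss:EMF}.
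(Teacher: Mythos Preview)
Your high-level strategy matches the paper's: write $FE_k = E_k - h_k$, prove non-negativity by comparing the Eisenstein growth $a_n \sim (2\pi e/k)^k n^{k-1}$ against a Hecke-type bound $|b_n| \leq C_k\, n^{k/2}$, and then deduce sparseness. However, there are two points where your proposal diverges from what actually works.

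First, your requirement ``$C_k$ at most exponential in $k$'' is too weak. Evaluating the comparison $C_k\, n^{k/2} \leq a_n$ at the critical value $n = m_k \sim k/12$ forces
\[
C_k \leq r(k)\, k^{-k/2} A^k \qquad \text{with } A < \frac{2\pi e}{\sqrt{12}}\,,
\]
so a merely exponential bound on $C_k$ misses by a factor of $k^{k/2}$. The paper obtains this sharper control by observing that Theorem~1 of \cite{MR2854563} already applies to \emph{any} cusp form: it expresses $C_k$ explicitly in terms of the first $m_k-1$ Fourier coefficients $b(m)$ of $h_k$. One then inserts $|b(m)| \leq \min(a(m),1)$, notes that the crossover $a(m)\sim 1$ occurs at $m_0 \sim k/(2\pi e)$, and bounds both sums appearing in the Jenkins--Rouse constant to extract the required $k^{-k/2}A^k$ behavior. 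Your proposed alternative via the triangular basis $\{E_4^a \Delta^b\}$ is unnecessary---Jenkins--Rouse applies verbatim to $h_k$, not only to the extremal cusp form---and would be much harder to make quantitative at the precision needed.

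Second, for sparseness the paper does not rerun the sum-splitting analysis. Once non-negativity is in hand, it simply invokes the light-spectrum criterion of \cite{Dymarsky:2020pzc}: $FE_k$ has non-negative coefficients and its first $m_k$ coefficients are, by construction, no larger than those of the already-sparse $E_k$, so sparseness follows immediately. Your direct route via $h_k(iy)\to 0$ would also succeed once the correct $C_k$ bound is available, but it is redundant work.
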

\begin{proof}
Let us first show that all coefficients are non-negative. 
We write the floored Eisenstein series as the Eisenstein series minus a cusp form that floors the coefficients: $E_k(\tau)=\sum a(n)q^n$ and $h(\tau)=-\sum b(n)q^n$, whence $b(n)=a(n)\mod1$ for $0<n<m_k$. We want to show that $|b(n)|\leq|a(n)|$ for all $n$.

We can estimate $a(n)$ from below by using bounds on Bernoulli numbers, such as in \cite{MR2854563}, 
\be\label{anbound}
0.9997\frac{(2\pi)^k}{(k-1)!}n^{k-1}0.997\sim~\left(\frac{2\pi e}{k}\right)^kn^{k-1}p(k)\leq a(n)\ ,
\ee 
where the $\sim~$ comes from Stirling's inequality, and the $p(k)$ contains any terms that are not exponential with $k$ that we can safely disregard. 

On the other hand, we can bound $|b(n)|$ by using Theorem 1 in \cite{MR2854563}, which gives an expression for the coefficients $b(n)$ of $h(\tau)$ in terms of its first $m_k-1$ coefficients. We can write it as
\be\label{bnbound}
|b(n)|\leq C_k \sigma_0(n) n^{(k-1)/2} \leq 2 C_kn^{k/2}\ ,
\ee 
where we used the bound $\sigma_0(n) \leq 2n^{1/2}$. The $n$-independent coefficient
$C_k$ is given by 
\be
C_k=\sqrt{\log k}\left( 11 \sqrt{\sum_{m=1}^{m_k-1}\frac{|b(m)|^2}{m^{k-1}}} + \frac{e^{18.72}(41.41)^{k/2}}{k^{(k-1)/2}} \left|\sum_{m=1}^{m_k-1} b(m) e^{-7.288m}\right| \right)
\ee

The first thing we note is that for $1\leq n < m_k$, non-negativity is automatic by construction. The second thing we notice is that the bound (\ref{anbound}) grows much faster in $n$ than the bound (\ref{bnbound}). It is thus enough to establish the inequality for $ n=m_k \sim k/12$, as it will automatically hold for all larger $n$
as well.

We need to establish 
\be 
C_k\leq p(k)\left(\frac{2\pi e}{k}\right)^kn^{k/2-1}\ .
\ee 
Since the right hand side grows with $n$, it is enough to establish this for the smallest relevant value of $n$, namely $n=m_k\sim k/12$. In that case we need 
\be 
C_k\leq p(k)\left(\frac{2\pi e}{k}\right)^k\left(\frac{k}{12}\right)^{k/2-1}=p(k)k^{-k/2}\left(\frac{2\pi e}{\sqrt{12}}\right)^k
\ee 
Comparing with the first inequality, we will have the result if we can show that
\be \label{Acondition}
C_k\leq r(k)k^{-k/2}A^k
\ee 
with $A<\frac{2\pi e}{\sqrt{12}}$ and $r(k)$ containing any terms not exponential with $k$. In the following we will always neglect such terms.

The expression for $C_k$ in \cite{MR2854563}  contains two sums over $b(m)$. Since $b(m) = a(m) \mod 1$, we bound it by
\be
b(m) \leq \min(a(m), 1)\ ,
\ee
where in turn we estimate $a(m)$ as
\be
a(m) \sim (2\pi e/k)^k m^k\ .
\ee
In particular, $a(m)\sim 1$ for $m_0 \sim k/(2\pi e)$. We will argue that for both sums, the term with $m=m_0$ is the maximal term, and use that term to bound the sum.

For the first sum, note that $a(m)^2/m^k$ is increasing in $m$. On the other hand, we know that $|b(m)|<1$. It follows that the maximal term indeed arises for $m_0$ such that $b(m_0)\sim 1$,  giving a contribution
\be
\frac{1}{m_0^{k-1}}=k^{1-k}(2\pi e)^{k-1}\ .
\ee 
Taking into account the square root, the sum is bounded by
\be
k^{-k/2}(\sqrt{2\pi e})^k\ .
\ee 
Since $\sqrt{2\pi e}<2\pi e/\sqrt{12}$, (\ref{Acondition}) is satisfied.

Next consider the second sum. Using the same estimate as above, the summand $m^k e^{-7.288m}$ then has its maximum at $m =k/7.288 > k/12$. This means that in the summation region the summands again increase in $m$. The maximal term is thus again at $m_0\sim k/(2\pi e)$ with $|b(m_0)|\sim 1$. We can thus estimate the sum (again neglecting factors that are non-exponential in $k$) as
\be
\frac{(41.41)^{k/2}}{k^{k/2}}  e^{-7.288 m_0} 
\sim k^{-k/2} \left(e^{-7.288/2\pi e} \sqrt{41.41}\right)^k
\ee
The parentheses evaluates to $A=4.20\ldots$, which is slightly smaller than $\frac{2\pi e}{\sqrt{12}}$. The second sum thus also satisfies (\ref{Acondition}).
The conclusion is thus that for large enough doubly even $k$, the coefficients of $FE_{k}$ are indeed non-negative.

Now that we have established non-negativity, sparseness follows from the results of \cite{Dymarsky:2020pzc}: Since we already know that the Eisenstein series $E_{2k}$ are sparse and the floored Eisenstein series have by construction smaller coefficients for the first $m_k-1$ coefficients (the `light coefficients' in their language), the $FE_{k}$ are also sparse.

\end{proof}

\section{Lattices from codes}
\subsection{Construction A}
Let us now switch gears and construct actual lattices and their theta functions. We will use lattices that are constructed from codes. This will allow us to construct families of lattices and investigate their large $n$ limit. There are different types of such constructions. We will mostly focus on construction A. To this end, let us first introduce a few definitions on codes:

\defn 

\begin{enumerate}

\item A binary linear code $C$ is a linear subspace of $\Z_2^n$. $n$ is called the length of $C$. The rate of the code is $R=k/n$, where $k=\dim C$.

\item For $c\in C$, the weight of $c$, denoted $w(c)$, is the number of nonzero entries in $c$.
 $C$ is sometimes called an $[n,k,d]$-code, where $d$ is the minimum (non-zero) weight.

\item $C^\perp$ is the dual code to $C$ with respect to the inner product over $\Z_2$, that is the set of elements in $\Z_2^n$ that have even inner product with all elements of $C$.
If $C=C^\perp$ then $C$ is called self-dual. 

\item If $w(c)$ is even for all $c$, then $C$ is called an even code, and if $w(c)$ is doubly even - a multiple of four - for all $c$, then $C$ is called a doubly even code.

\item All self-dual codes are even. If $C$ is an even self-dual code, $C$ is called a Type I code, and if $C$ is a doubly even self-dual code, $C$ is called a Type II code. All type II codes have length divisible by 8.

\end{enumerate}

Finally let us introduce the weight enumerator of a code. The weight enumerator of $C$ is the polynomial 
\be
W_C(X,Y)=\sum_{k=0}^nA_kX^{n-k}Y^k\ ,
\ee
where $A_k$ is the number of codewords of weight $k$.

Now we introduce ways to construct lattices from codes.
\defn 
\textit{Construction A}: The Construction A lattice of $C$ is denoted $L_C$ and is given by $\frac{1}{\sqrt{2}}\rho^{-1}(C)$, where $\rho$ is reduction mod 2. This means that
\be\label{LconstA}
L_C = \left\{ \frac1{\sqrt{2}}(c+v) : c\in C, v \in (2\Z)^n \right\}\ ,
\ee
where by a slight abuse of notation we denote by $c$ also the codeword $c$ interpreted as a vector in $\R^n$.

\prop $L_C$ is a lattice in $\R^n$ with the following properties :

\begin{enumerate}
\item $\det L_C=2^{n-k-\frac{n}{2}}$.

\item $C\subseteq C^\perp$ iff $L_C$ is integral.

\item $C$ is doubly even iff $L_C$ is even.

\item $C$ is self-dual iff $L_C$ is self-dual.

\item A basis for $L_C$ is obtained  by taking a basis $c_1,...,c_k$ for $C$, extending it to a basis for $\Z_2^n$ in such a way that coercing the basis into $\mathbb{Z}^n$ results in a $\mathbb{Z}$-basis for $\mathbb{Z}^n$, and taking the $\rho^{-1}(c_i)$'s and $2\rho^{-1}(e_i)$'s.

\end{enumerate}
See \eg Chapter 5 of \cite{MR1662447} for this. The upshot is that type II codes give even self-dual lattice under construction A.

From (\ref{LconstA}) it is then straightforward to see that the theta series of $L_C$ is obtained by taking the Jacobi theta functions $\theta_3(0,q)=\sum_{k\in\Z}q^{k^2}$ and $\theta_2(0,q)=\sum_{k\in\Z}q^{(k+1/2)^2}$ and plugging them into the weight enumerator, 
\be
\Theta_{L_C} = W_C(\theta_3,\theta_2)\ .
\ee

\subsection{$\theta_3$-sparseness}
Let us now discuss the limit of families of lattices that come from construction A.
The free energy of the Construction A lattice is given by
\be\label{FconstA}
\cF_n(y)=\frac{1}{n}\log\left(\sum_{k=0}^nA_k\theta_3^{n-k}\theta_2^k\right)
=\log(\theta_3)+\frac{1}{n}\log\left(\sum_{k=0}^nA_k\left(\frac{\theta_2}{\theta_3}\right)^k\right)\ .
\ee
We can immediately see that no such lattice can be sparse in the sense of (\ref{latticesparse}) because of the term $\log(\theta_3)$. This is simply a consequence of the fact that construction A adds lattice vectors $(2\Z)^n$, which lead to the contribution $\theta_3(y)$. We therefore introduce the following  generalization of the sparseness condition for construction A lattices:
Let $C_n$ be a family of doubly even self-dual codes of length $n$. We say the corresponding lattice $L_C$ is $\theta_3$-sparse if
\be
\lim_{n\to\infty} \cF_n(y) = \log(\theta_3(iy))\quad \textrm{for all }\ y>1\ ,
\ee

\subsection{Limits of various constructions}
Let us now compute the free energy of construction A lattices coming from various codes. As a warmup, let us consider some codes that are not type II.

\begin{example}The zero code has weight enumerator $X^n$, so that 
\be
\cF(y)= \lim_{n\rightarrow\infty}\frac1n\log(\theta_3(iy)^n)=\log(\theta_3(iy)).
\ee
\end{example}

\example The repetition code $R$, the code consisting of only the zero codeword and the codeword consisting of all 1's, has $W_R=X^n+Y^n$, giving
\be
\cF(y) = \max(\log(\theta_3(iy)),\log(\theta_2(iy))\ .
\ee

\example The universe code $\Z_2^n$ has $A_k=\binom{n}{k}$, so the weight enumerator is $(X+Y)^n$, and we attain
\be
\cF(y)=\log(\theta_3(iy)+\theta_2(iy))\ .
\ee

\example Let $B_n$ be the $(n/2)\times n$ matrix with entries from $\Z_2^n$ where the $k$th row has a pair of 1's in the $(2k-1)$th and $(2k)$th slots. Let $P_n$ be the linear code generated by $B_n$. $P_n$ is a Type I code, and we can find its weight enumerator by considering linear combinations of the rows of $B_n$; to get a weight $k$ codeword, we must have $k/2$ summands in the linear combination, so the number of codewords of weight $k$ is $\binom{n/2}{k/2}$. We thus have
\be 
W_{P_n}= \sum_{k=0}^{n/2}\binom{n/2}{k}X^{2k}Y^{n-2k}= (X^2+Y^2)^{n/2}\ ,
\ee 
giving
\be 
\cF(y)=\log(\theta_3(iy))+\frac{1}{2}\log\left(\left(\frac{\theta_2(iy)}{\theta_3(iy)}\right)^{2}+1\right)
\ee

\example The even universe code, consisting of all and only the codewords of $\Z_2^n$ with even weight, has weight enumerator $\frac{1}{2}((X+Y)^n+(X-Y)^n)$, so the limit of the free energy is 
\be
\cF(y)=\log(\max(|\theta_3(iy)+\theta_2(iy)|,|\theta_3(iy)-\theta_2(iy)|))
\ee

Let us give an example of a type II code.

\example Let $B_n$ be the $(n/2)\times n$ matrix with rows $e_k$, $k=0,\ldots n/2-1$, where the $e_k$ are given by
\bea
(e_0)_j &=& \left\{\begin{array}{cl} 1 &: j=0,n-2\\ 
0 &: j=1,n-1 \\
j \mod 2 &: \textrm{else}
\end{array}
\right.\\
(e_1)_j &=& j \mod 2 \\
(e_k)_j &=&\left\{\begin{array}{cl} 1 &: j=2k-2,2k-1,n-2,\textrm{ or } n-1\\ 
0 &: \textrm{else}
\end{array}
\right. \qquad k>1
\eea
where we take $j$ to run from $0$ to $n-1$.
For instance, $n=8$ gives
\be 
B_8=\begin{bmatrix}1&0&0&1&0&1&1&0\\0&1&0&1&0&1&0&1\\0&0&1&1&0&0&1&1\\0&0&0&0&1&1&1&1\end{bmatrix}
\ee 

\begin{lem}
Let $n$ be even, and let $S_n$ be the linear code generated by $B_n$. Then $S_n$ is a code of dimension $n/2$. It is type I if $n\mod 4 =0$, and type II if $n\mod 8 = 0$.
\end{lem}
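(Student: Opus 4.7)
The plan is to proceed in three steps: first establish that the $n/2$ rows of $B_n$ are linearly independent, so $\dim S_n = n/2$; second verify that $S_n$ is self-orthogonal whenever $4\mid n$; combined, these give $S_n \subseteq S_n^\perp$ together with $\dim S_n = \dim S_n^\perp = n/2$, and hence $S_n = S_n^\perp$ is self-dual. The type classification then follows from the weights of the generators together with the standard fact that in a self-orthogonal binary code generated by codewords of weight divisible by $4$, every codeword has weight divisible by $4$.

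For linear independence I would exhibit pivot columns. Row $e_0$ is the unique row with a $1$ in column $0$, row $e_1$ the unique row with a $1$ in column $1$, and for $2 \leq k \leq n/2-1$, row $e_k$ is the unique row with a $1$ in column $2k-2$. The last claim is the only nontrivial one: column $2k-2$ lies in $\{2,4,\ldots,n-4\}$, so $(e_0)_{2k-2} = (2k-2)\bmod 2 = 0$, $e_1$ has no $1$'s at even positions, and the other $e_l$'s with $l>1$ hit only $\{2l-2,2l-1,n-2,n-1\}$, which does not contain $2k-2$ for $l\neq k$.

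Next I would compute weights and pairwise inner products directly from the definitions. A short count gives $w(e_0) = n/2$ (two contributions from positions $0$ and $n-2$, plus the $n/2-2$ odd positions strictly between), $w(e_1)=n/2$, and $w(e_k)=4$ for $1<k\leq n/2-1$ since the positions $2k-2,\,2k-1,\,n-2,\,n-1$ are distinct. The $\Z_2$-inner products evaluate to $\langle e_0,e_1\rangle = n/2-2$ (the shared odd positions), $\langle e_0,e_k\rangle = 2$ (from $2k-1$ and $n-2$), $\langle e_1,e_k\rangle = 2$ (from $2k-1$ and $n-1$), and $\langle e_k,e_l\rangle = 2$ (from $n-2,n-1$) for $1<k<l$. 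All are even precisely when $n/2$ is even, i.e.\ when $4\mid n$, giving self-orthogonality in that range.

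Finally, for the type: self-duality already forces Type I, since every generator has even weight. For Type II I would invoke the identity $w(u+v) = w(u)+w(v)-2|u\cap v|$ together with self-orthogonality, which ensures $|u\cap v|$ is even and hence $2|u\cap v|\equiv 0\pmod 4$; a straightforward induction on the number of generator summands then promotes ``generator weights divisible by $4$'' to ``all codeword weights divisible by $4$.'' Since $w(e_k)=4$ is automatic for $k>1$, the condition reduces to $4 \mid w(e_0) = w(e_1) = n/2$, i.e.\ $8\mid n$. The main source of friction is purely bookkeeping: one must keep the carve-out positions $0,1,n-2,n-1$ in the definition of $e_0$ carefully separated from the generic positions when counting weights and overlaps, and check that no collision with $2k-2$ or $2k-1$ occurs in the allowed range of $k$. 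Once this is organized, every step reduces to elementary arithmetic.
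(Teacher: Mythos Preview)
Your proposal is correct and follows essentially the same approach as the paper: pivot columns (the paper says ``upper triangular'') for linear independence, direct computation of pairwise inner products to get self-orthogonality when $4\mid n$, and the weight identity $w(u+v)=w(u)+w(v)-2|u\cap v|$ together with self-orthogonality to propagate doubly-even weights when $8\mid n$. Your version is simply more explicit in the bookkeeping, writing out the overlap counts $n/2-2$, $2$, $2$, $2$ rather than arguing qualitatively as the paper does.
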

\begin{proof}
We see that the $e_k$ form an upper triangular basis and are therefore linearly independent, so that $\dim S_n=n/2$. 

Assume $n\equiv0\mod4$. Define $E=\{e_2,...,e_{n/2-1}\}$. For all $x,y\in E$, clearly $x\cdot y=0$, since the only matching 1's are the last two. Moreover $e_1\cdot x=0$ for any $x\in E$, since $e_1$ only matches $x$ once in a middle pair and once in the last slot. We also have $e_i\cdot e_i=0$ for all $i$, as they are even codewords. 
It remains to show that $e_0\cdot e_1=e_0\cdot x=0$ for $x\in e$.
For $e_0\cdot e_1$, they match once in each middle pair, of which there is an even number.
For $e_0\cdot x$, they match once in the middle pairs and once in the second but last slot.
Hence we have $S_n\subseteq S_n^\perp$, which in particular establishes evenness. Moreover, since $\dim S_n=n/2$, the code is also self-dual.

Next assume $n\equiv0\mod8$. Note that if $u,v$ are doubly even codewords and $u\cdot v=0$, then $wt(u+v)=wt(u)+wt(v)-2j$, where $j$ is the number of matching 1's. Because of $u\cdot v=0$, $j$ has to be even, from which it follows that $u+v$ is also doubly even. Since for $n\equiv 0 \mod 8$, $S_n$ is generated by doubly even codewords, and moreover $S_n\subset S^\perp_n$, $S_n$ is doubly even and hence type II. 
\end{proof}

The following gives the weight enumerator of $S_n$: 
\begin{lem} The weight enumerator of $S_n$ is given by $A_k=\binom{n/2}{k/2}$ for $k\neq n/2$ and $A_k=\binom{n/2}{n/4}+2^{\frac{n}{2}-1}$ for $k=n/2$.

\end{lem}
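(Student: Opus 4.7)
The plan is to parametrise every codeword by its coordinates $(a_0, a_1, \ldots, a_{n/2-1}) \in \Z_2^{n/2}$ with respect to the basis $\{e_k\}$, set $r := |\{k \geq 2 : a_k = 1\}|$, and compute the Hamming weight position by position. I would split the $n$ positions into four groups: position $0$, position $1$, the even middle positions $j = 2k-2$ and odd middle positions $j = 2k-1$ (for $k = 2, \ldots, n/2-1$), and the tail pair $\{n-2, n-1\}$. On each group the value of the bit depends only on $a_0$, $a_1$, $a_k$, and $r$, so the weight is a sum of at most four elementary terms.

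Carrying out the bookkeeping yields the weight table
\be\nonumber
w(c) = \begin{cases} 2r + 2(r \bmod 2) & \text{if } (a_0,a_1) = (0,0), \\ 2r + 2 + 2((1+r) \bmod 2) & \text{if } (a_0,a_1) = (1,1), \\ n/2 & \text{if } (a_0,a_1) \in \{(0,1),(1,0)\}. \end{cases}
\ee
The delicate case is the mixed one: at each odd middle position the bit becomes $(1 + a_k) \bmod 2$, so the odd middle contribution is $(n/2-2) - r$ while the even middle contribution is $r$; these cancel into the $r$-independent value $n/2 - 2$, and the two tail bits contribute exactly one further $1$, producing the constant weight $n/2$ regardless of $r$.

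Counting codewords is then pure combinatorics. The mixed case supplies all $2 \cdot 2^{n/2-2} = 2^{n/2-1}$ of its codewords to weight $n/2$. For $k = 4\ell$ with $k \neq n/2$ only the $(0,0)$ and $(1,1)$ cases contribute, yielding $\binom{n/2-2}{2\ell} + 2\binom{n/2-2}{2\ell-1} + \binom{n/2-2}{2\ell-2}$; for $k = n/2$ one adds the analogous expression with $2\ell$ replaced by $n/4$ together with the $2^{n/2-1}$ contribution from the mixed case. Applying Pascal's identity twice in the form $\binom{m}{j} = \binom{m-2}{j} + 2\binom{m-2}{j-1} + \binom{m-2}{j-2}$ (with $m = n/2$ and $j = k/2$) then collapses both binomial sums to $\binom{n/2}{k/2}$, completing the two formulas.

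The main obstacle is the position-by-position weight computation itself: the interactions between $e_0$, $e_1$, and the shared tail pair produce several off-by-parity corrections that must be tracked separately by parity of $r$, and one has to verify carefully that the mixed case really does give a weight independent of $r$. Once the weight table is in hand, the remainder reduces to Pascal's identity and is essentially mechanical.
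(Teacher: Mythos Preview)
Your proposal is correct and follows essentially the same route as the paper: both split the codewords into four classes according to $(a_0,a_1)$, observe that the two mixed classes all have constant weight $n/2$ while the $(0,0)$ and $(1,1)$ classes have weight depending on $r$ through a parity correction, and then collapse the resulting sum of four $\binom{n/2-2}{\cdot}$ terms to $\binom{n/2}{k/2}$ via Pascal's identity. The only difference is presentational---you record the weight as an explicit formula in $r$, whereas the paper narrates the same computation in terms of bits ``flipping''.
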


\begin{proof} Define again $E=\{e_2,...,e_{n/2-1}\}$.

First, consider linear combinations $\langle E\rangle$ of the set $E$. When there is only one nontrivial summand, the result is weight 4. When there are only two nontrivial summands, two 0's in the middle will flip and two 1's at the end will flip, keeping the result at weight 4. Now, if one has only one nontrivial summand and then adds two more, it will add two pairs of 1's in the middle, and the end will flip twice, which is the same as not flipping, and so the weight increases by 4. The same occurs when one has two nontrivial summands and adds two more, so to get weight $k$, one must have $\frac{k}{2}-1$ or $\frac{k}{2}$ nontrivial summands, giving us $\binom{n/2-2}{k/2-1}+\binom{n/2-2}{k/2}$ combinations for weight $k$ codewords.

Next, consider vectors of the form $e_0 + \langle E\rangle$.  $e_0$ is weight $n/2$, and adding any vector in $\langle E\rangle$ will leave that weight unchanged, because each $e_i$ from $E$ will flip a 0 and 1 pair in the middle and a 0 and 1 pair at the end, keeping the weight same. The same argument works for $e_1 +\langle E\rangle$. In total we get $2^{\frac{n}{2}-1}$ such codewords of weight $n/2$.

Finally, consider $e_0+e_1+\langle E\rangle$. $e_0+e_1$ is weight 4, and $e_0+e_1+e_i$ for any $e_i\in e$ is also weight 4 for the same reasoning as in the first case. Similarly to before, adding pairs of $e_i$'s from $E$ to either of these cases will increase weight by 4, so to get weight $k$, add $\frac{k}{2}-2$ or $\frac{k}{2}-1$ $e_i$'s from $E$ to the respective scenarios, giving us $\binom{n/2-2}{k/2-2}+\binom{n/2-2}{k/2-1}$ more possibilities for weight $k$ codewords.

This exhausts all the possibilities, so collecting the results and applying Pascal's identity thrice yields the result.
\end{proof}

By reindexing with $k=4l$ and applying the result of the binomial theorem for $(1+x)$, we can use this result to write the theta series as
\be
\frac{\Theta(iy)}{\theta_3(iy)^n}=\frac{1}{2}\left(\left(-\left({\frac{\theta_2(iy)}{\theta_3(iy)}}\right)^2+1\right)^{n/2}+\left(\left({\frac{\theta_2(iy)}{\theta_3(iy)}}\right)^2+1\right)^{n/2}\right)+2^{\frac{n}{2}-1}\left({\frac{\theta_2(iy)}{\theta_3(iy)}}\right)^{n/2}
\ee 
 For $y>1$, ${\frac{\theta_2(iy)}{\theta_3(iy)}}< 1/2.4$, so that the second term dominates over the first and last term to get zero. This leaves us with the limit of the free energy as
\be
\cF(y) = \log(\theta_3(iy))+\frac12\log
\left(1+\left({\frac{\theta_2(iy)}{\theta_3(iy)}}\right)^2\right)
\qquad y>1\ .
\ee 
Note that since $S(n)$ is a doubly even self-dual code, we can obtain the result for $y<1$ by modular invariance.

\section{Reed-Muller code lattices}

\subsection{Reed-Muller codes}

Let us now discuss a candidate for $\theta_3$-sparse lattices: the collection of Construction A lattices obtained from the self-dual Reed-Muller codes.

Let us first define these codes. We denote by
$RM(r,m)$ the $r$th order Reed-Muller code of length $2^m$. 
To construct it, consider the space of polynomials $p(x_1,\ldots,x_m)\in \mathbb{F}_2[x_1,...,x_m]$ in $m$ variables $x_i$ of degree $r$ or less over $\Z_2$. Each such polynomial $p$ then gives a codeword in $\Z_2^{2^m}$ by evaluating it for each element in $\Z_2^m$; that is, the $j$-th entry of the codeword is given by $p(b_1,\ldots,b_{m})$, where the $b_l$ are the digits of $j$ written in binary. 
We note that such polynomials  form a vector space over $\Z_2$ and that the evaluation map is a linear map. The codewords thus indeed give a linear code, namely the Reed-Muller code $R(r,m)$.

Note that since we are evaluating $p$ over $\Z_2$, we can work in $\mathbb{F}_2[x_1,...,x_m]/\langle x_1=x_1^2, ..., x_m=x_m^2\rangle.$, since $x_i$ and $x_i^2$ evaluate in the same way. That is, we can restrict ourselves to polynomials that contain at most one power of each $x_i$, so that the space of codewords is given by evaluating the polynomials
\be
RM(r,m) = \langle x_1^{n_1} \cdots x_m^{n_m} : 0\leq n_i \leq 1, \sum_{i=1}^m n_i \leq r \rangle\ .
\ee
Let us give a few properties of such codes; see \cite{MWS}
\begin{enumerate}
\item $RM(r,m)$ is even iff $r<m$. 
\item $RM(r,m)$ is doubly even iff $m>1$ and $r\leq\frac{m-1}{2}$. 
\item The dual code of $RM(r,m)$ is $RM(m-r-1,m)$; consequently, $RM(r,m)$ is self-dual iff $r=\frac{m-1}{2}$. Clearly, this is only possible if $m$ is odd.
\item In the notation above, $RM(r,m)$ is a $[2^m, \sum_{i=0}^r\binom{m}{i},2^{m-r}]$ code.\label{RMshortvec}
\end{enumerate}
As a consequence of the above, $RM\left(\frac{m-1}{2},m\right)$ for $m$ odd is a type II code. They thus form a natural candidate for our analysis.

Let us discuss point \ref{RMshortvec} above a bit further. It states that for Reed-Muller codes, the lightest (non-trivial) codewords have weight $2^{m-r}$. In particular, for our family of type II codes $RM\left(\frac{m-1}{2},m\right)$, the lightest codeword has weight
\be
w = 2^{\frac{m+1}2}\ .
\ee
In CFT language we would say that the code has a gap that grows parametrically in $m$. Any sparseness criterion is essentially the condition that light states, vectors or codewords have small multiplicity $a_k$. Reed-Muller codes having a large gap means that the very lightest vectors have multiplicity zero. It is thus natural ask if they are sparse, or, more precisely, $\theta_3$-sparse. In the following we will not fully prove this, but we will collect fairly compelling evidence that lattices constructed from $RM((m-1)/2,m)$ codes are indeed $\theta_3$-sparse. On one hand, this is an interesting result in its own right. On the other hand, since such RM codes can be also be used to construct Barnes-Wall lattices \cite{MR1662447}, it may also have implications for the sparseness of Barnes-Wall lattices.

\subsection{Universal behavior of BEC code lattices}
Reed-Muller codes are actually an example of codes that achieve capacity for the binary erasure channel (BEC). For such codes, or for codes whose dual code $C^\perp$ achieves BEC capacity, bounds on the number of codewords $A_k$ of weight $k$ were obtained in \cite{8853269}. Let us first discuss bounds for BEC capacity code lattices, and then apply them to Reed-Muller codes.

We will write (\ref{FconstA}) as 
\be\label{cFy}
\cF_n(y)
=\log(\theta_3)+\frac{1}{n}\log\left(\sum_{k=0}^nA_k\alpha^k\right)\ ,
\ee
where we defined 
\be
\alpha(y):= \theta_2(iy)/\theta_3(iy)\ .
\ee
Note that $\alpha(y)$ is monotonically decreasing for $y>1$ and also $0< \alpha(y) < 1$ for $y>1$.  Establishing $\theta_3$-sparseness is thus a question of bounding the second term in (\ref{cFy}).

To do this, we will use the results of \cite{8853269}. For a code $C$ of length $n$ whose dual code $C^\perp$ achieves BEC capacity, its proposition 1.6 gives the following inequality for the $A_k$:
\be\label{BECbound}
A_k \leq 2^{o(n)} \left(\frac{1}{1-R} \right)^{(2\log 2)\min(k,n-k)}\ .
\ee
We can thus bound the sum in (\ref{cFy}) by
\be
2^{o(n)} \left(\frac{1-q^{n/2+1}}{1-q} + q^{n/2} \frac{1-\tilde q^{n/2+1}}{1-\tilde q}\right)\ ,
\ee
where 
\be
q := \frac{\alpha(y)}{(1-R)^{2\log 2}}\ , \qquad \tilde q = \alpha(y)(1-R)^{2\log 2}\ .
\ee
Here $R$ is the rate of the code. Because we are interested in $y>1$ and $R\leq 1$, we get the following bounds for the limit $\cF(y):= \lim_{n\to \infty} \cF_n(y)$:
\bea
\cF(y) = \log(\theta_3) &:& q<1\label{Fboundq} \\
\cF(y) \leq \log(\theta_3) + \frac12 \log q &:& q>1
\eea
It follows that $\cF(y) = \log(\theta_3)$ if $y>y_c$, where $y_c$ is fixed by the condition that $q(y_c)=1$.
Self-dual codes have $R=1/2$. In this case we find that
\be
y_c = 1.05\ldots\ .
\ee
This is not quite enough to prove that the code is $\theta_3$-sparse, but it comes very close.

Let refine this bound a little further.
Lemma 1.4 in \cite{8853269} introduces a function $F(\lambda,2)$ as  
\be 
F(\lambda,2)=\log_2\left(\frac{1}{|C|}\sum_{k=0}^nA_k(1-\theta)^k(1+\theta)^{n-k} \right)
\ee
for the parameter $0\leq\lambda\leq1$, where we defined $\theta=\lambda^{2\log2}$.
We can use this to express the sum in (\ref{cFy}) as
\be
\sum_{k=0}^nA_k\alpha^k=\sum_{k=0}^nA_k\left(\frac{1-\theta}{1+\theta}\right)^k = 2^{F(\lambda,2)}|C|(1+\theta)^{-n}
\ee
where we identified
\be
\theta = \frac{1-\alpha}{1+\alpha}\ .
\ee
In particular, $\theta$ increases monotonically with $y$.
In total this means we can express the free energy as
\be
\cF_n(y)=\log (\theta_3) + \frac1n F(\lambda,2) \log 2 +\frac1n\log |C_n| - \log(1+\theta) 
\ee
If $C$ achieves BEC capacity, then corollary 1.5 in \cite{8853269} asserts that there exists $\lambda=R+o(1)$ such that  $F(\lambda,2)\leq o(n)$. For self dual codes we have $R=1/2$ and $|C_n|=2^{n/2}$. This choice of $\lambda$ thus leads to a $\theta_c = 2^{-2\log 2}$ for which 
\be
\cF(y_c) = \log(\theta_3) + \log\frac{\sqrt 2}{1+2^{-2\log 2}} \ .
\ee
The free energy decreases monotonically in $y$, so this gives an upper bound that is valid for $y>y_c$.
With
\be
\alpha(y_c)= \frac{1-2^{-2\log 2}}{1+2^{-2\log 2}}\ .
\ee
we find $y_c \simeq 0.95 <1$, so the bound actually holds in the entire region of interest.

Collecting the various bounds, we find that for self-dual codes achieving BEC capacity we have
\bea
\cF(y) = \log(\theta_3) &:& y> 1.05\ldots\\
\cF(y) \leq \log(\theta_3) + \frac12 \log \alpha(y)2^{2\log 2} &:&  1.05 > y > 1.02 \\
\cF(y) \leq \log(\theta_3) + 0.0226 &:&  1.02>y >1
\eea

\subsection{Reed-Muller codes}
Reed-Muller codes achieve BEC capacity, so the above bounds apply to our family $RM(r,2r+1)$. Ideally we would like to establish that they are $\theta_3$-sparse. That is, we would like to sharpen the above bounds so that $\cF(y)=\log(\theta_3)$ all the way to $y>1$. To do this, we would want to sharpen the bound (\ref{BECbound}) on the multiplicities to something like
\be\label{RMbound}
a_w < 2^{o(n)} r^w
\ee
where $r\alpha(1) \leq 1$. Such a bound would establish (\ref{Fboundq}) for all $y>1$. Concretely, we need $r \leq \alpha(1)^{-1} =2.4$. 
Even though we won't establish (\ref{RMbound}) for the self-dual RM codes, we will collect some evidence in its favor by showing that it is valid for low values of $w$.

To this end, we will collect some information on the $a_w$ for the self-dual RM codes $RM(r,m=2r+1)$. For $RM(r,m)$, the lowest weight codewords have weight $d= 2^{m-r}$ with multiplicity \cite{8853269}
\be\label{aminweight}
A_{2^{m-r}}=2^r\prod_{i=0}^{r-1}\frac{2^{m-i}-1}{2^{r-i}-1}\ .
\ee
For our family, we take $r=(m-1)/2$ and $m$ large, so that the minimum weight is $d=2^{r+1}$.
We can then approximate (\ref{aminweight}) by
\be
A_d \sim  2^{r(r+2)} \sim 2^{(\log_2 d)^2}\ .
\ee
Comparing this to (\ref{RMbound}), we find that our RM codes are far from saturating it: There are no codewords of weights $1<w<d$, and $a_d$ grows sub-exponentially in $m$, unlike what is allowed by (\ref{RMbound}).

We can push this analysis a little bit further:
\cite{KT70} gives the multiplicities of all codewords up to weight $2d$, and \cite{KASAMI1976380} then does the same up to $2.5d$.
For $d<w<2d$, \cite{KT70} establishes that there are no codewords unless $w= 2d-2d/2^\mu$, where for self-dual Reed Muller codes $RM(r,2r+1)$ $\nu$ satisfies $1\leq \mu \leq \alpha = r $.
We find
\be
A_{w(\mu)} \sim 2^{-\mu ^2+2 \mu +r^2+2 \mu  r+2 r}
\ee
which is increasing in $\mu$. The heaviest codewords in this range have $\mu=r$, giving $w=2d-4$, with multiplicity 
\be\label{2dbound}
A_{2d-4}\sim 2^{2r(r+2)} \sim 2^{2(\log_2 2d)^2}\ .
\ee
For states with $d<w<2d$ we can use (\ref{2dbound}) as a uniform bound as
\be\label{wbound}
A_w < A_{2d-4} \sim 2^{2(\log_2 2d)^2} < 2^{2(\log_2 2w)^2}\ .
\ee
Even though we are using very rough estimates here, the growth in (\ref{wbound}) is still sub-exponential, and thus easily satisfies (\ref{RMbound}). We take this as evidence that the self-dual RM codes are indeed $\theta_3$-sparse.

\section*{Acknowledgements}

We thank Gabriele Nebe for helpful discussions. We thank Anatoly Dymarsky for helpful discussions and comments on the draft.
The work of CAK and AWR is supported in part by NSF Grant 2111748. CAK thanks the Pauli Center at ETH Zurich for hospitality, where part of this work was completed.

\bibliographystyle{ytphys}
\bibliography{refmain}

\end{document}